\documentclass[11pt]{article}
\evensidemargin 0cm    \topmargin 0cm  \headheight 0.5cm
\oddsidemargin 0cm     \headsep 0cm
\textwidth 160mm     \textheight 247mm
\footskip2\baselineskip
\columnseprule0.1pt \topmargin -0.5in \oddsidemargin0pt
\parskip 0.25\baselineskip
\parindent 0.05in

\usepackage{amsthm}
\usepackage{amsfonts}
\usepackage{amssymb}
\usepackage{graphicx}
\usepackage{caption}
\usepackage{subcaption}
\usepackage{amsmath}
\usepackage{lscape}

\usepackage{fullpage}
\usepackage{tikz}
\thispagestyle{empty}

\newtheorem{Theorem}{Theorem}
\newtheorem{Definition}{Definition}
\newtheorem{Corollary}{Corollary}
\newtheorem{Lemma}{Lemma}
\newtheorem{Example}{Example}
\newtheorem{Proposition}{Proposition}
\newtheorem{Remark}{Remark}
\begin{document}
\title{\bf A  Shock Model Based  Approach to Network Reliability}
\author{
\textbf{S. Zarezadeh\footnote{Department of Statistics, Shiraz University, Shiraz 71454, Iran  (E-mail: s.zarezadeh@hotmail.com)}, \ \ S. Ashrafi\footnote{Department of Statistics, University of Isfahan, Isfahan 81744, Iran (E-mail: s.ashrafi@sci.ui.ac.ir)}\ \ and\ \ M. Asadi\footnote{{Department of Statistics, University of Isfahan, Isfahan 81744, Iran \& School of  Mathematics, Institute of Research in Fundamental Sciences (IPM), P.O Box 19395-5746, Tehran, Iran} (E-mail: m.asadi@sci.ui.ac.ir)}}\\
}
\date{}

\maketitle
\begin{abstract}
 We consider a network consisting of $n$ components (links or nodes) and assume that the network has two states, {\it up } and {\it down}. We further  suppose  that the network  is subject to shocks that appear according to a counting process and that each shock may lead to the component  failures. Under some assumptions on the shock occurrences,  we present a new variant of the notion of {\it signature} which we call it {\it t-signature}. Then t-signature based mixture representations for the reliability function of the network are obtained.  Several stochastic properties of the network lifetime are  investigated. In particular, under the assumption that the number of failures at each shock follows a binomial distribution and the  process of shocks is non-homogeneous Poisson process, explicit form of the network reliability is derived and  its aging properties are explored. Several examples are also provided.\\ \\
{\bf Keywords:} signature, fatal shocks, counting process,  nonhomogeneous Poisson process, two-state networks, stochastic ordering.
\end{abstract}

\section{Introduction}
Networks include a wide variety  of real-life systems in  communication, industry, software engineering, etc.
A network is defined to be  a collection of {\it nodes} ({\it vertices}) and {\it links} ({\it edges}) in which some particular nodes are called {\it terminals}. For instance,  nodes can be considered as road intersections, telecommunications switches, servers, and computers; and examples of links  can be telecommunication fiber, railways,  copper cable, wireless channels, etc.

\quad According to the existing literature,  a network  can be modeled  by the triplet ${\bf N}=(V,E,T)$, in which $V$ shows the node set, where we assume  $|V|=m$, $E$ stands for  link set, with $|E|=n$, and $T\subseteq V$ is a set of all terminals.
When all terminals of the network are connected to each other, the network is called
$T-$connected.   We assume that the components (links or nodes) of a network are subject to failure, where the failure of the components   may occur according to a stochastic mechanism. A link failure means that the link is obliterated and a node failure means that all links incident to that node are erased. Assuming that the network  has two states {\it up}, and {\it down}, the failure of the components may result in the change of the  state of the network.

\quad In  reliability engineering literature, several approaches are proposed to assess the reliability of a network.  An approach, to study the reliability of a network with $n$ components,  is based on the assumption that the components of the network have statistically independent and identically distributed (i.i.d.) lifetimes $X_1,X_2,\dots, X_n$, and the network has a lifetime $T$ which is a function of $X_1,\dots,X_n$.   An important concept in this approach  is the notion of {\it signature} that is presented in the following definition; see  \cite{saman} and \cite{Gert2}.
\begin{Definition}
{\rm Assume that $\pi=(e_{i_1},e_{i_2},\ldots,e_{i_n})$ is a permutation of the network components  numbers. Suppose that all components in this permutation are up. We move along the permutation, from left to right, and turn the state of each component from up  to down state. Under the assumption that all permutations are equally likely,  the signature vector of the network is defined as ${\bf s}=(s_1,...,s_n)$ where
 $$s_i={n_i}/{n!},\qquad i=1,\dots, n$$  where $n_i$ is the number of permutations in which the failure of $i$th component cause the state of the network changes to a down state. In other words, $s_i$ is the probability that the lifetime of the network equals the $i$th ordered lifetimes among $X_i$'s, i.e., $s_i=P(T=X_{i:n}),$  where $X_{i:n}$ is the $i$th order statistic among the random variables $X_1,X_2,\dots,X_n$. }
\end{Definition}
 The signature vector depends on both the structure of the network and how to define its states. However, it does not depend on the real random mechanism of the component failures. Under this setting, the reliability of the network lifetime $T$,  at time $t>0$, can be represented as
\begin{eqnarray}
P(T>t)=\sum_{i=1}^{n}s_i P(X_{i:n}>t)\label{sigj},
\end{eqnarray}
see \cite{saman1}.  In recent years, a large number of research works are reported in the literature  investigating different  properties of the reliability function (\ref{sigj}). We refer,  among others, to  \cite{saman1}-\cite{Zhang-Li} and references therein.

Another approach, in assessing the reliability of a network, is recently proposed by Gertsbakh and Shpungin \cite{Gert2}. These authors  consider a network with $n$  components, and  assume that  the component failures appear according to a renewal process $\{N(t),t\geq 0\}$  defined as a sequence of i.i.d. non-negative random variables (r.v.s)  $Y_1,Y_2,...,Y_k,\dots$. The random variable $N(t)$ shows the number of  components that fail in the network on interval $[0,t]$, and the failures in $\{N(t),t\geq 0\}$ appear at the instants $S_k=\sum_{i=1}^k{Y_i}, \ \ k=1,2,\dots$. Under the assumption that all orders of component failures are equally likely, the reliability function of the network lifetime $T$ can be represented as
\begin{align}
P(T> t)=\sum_{i=1}^{n}\bar{S}_iP(N(t)=i),\quad t>0,\label{e11}
\end{align}
where $\bar{S}_{i}=\sum_{k=i+1}^n{s_k}, $. Motivated by this, under the assumption that the failure of the network  components  occur according to a counting process,  Zarezadeh and Asadi \cite{zare-ieee} investigated various properties of the model in (\ref{e11}) based on  different scenarios.   Zarezadeh et al. \cite{zare-ejor} studied stochastic properties of  dynamic reliability of networks under the assumption that the components fail according to a nonhomogeneous Poisson process (NHPP).

\quad The aim of the present study is to give new  models for the reliability of the network under the assumption that the components of the network are subject to shocks. We consider a two-state network and assume that the network is subject to shocks that  appear according to a counting process. We further  assume that each shock may lead to component failure and consequently  the network finally fails  by one of the arriving  shocks. The reset of the paper is organized as follows: In Section 2, we obtain the  mixture representations for the reliability of the network lifetime. For this purpose, a new variant of the notion of signature, call it {\it t-signature},  is introduced which allows us to assume that at same time more that one component failure may occur.  We  then compare the t-signature based reliability of two different networks under various assumptions. In Section 3, we assume that the number of failed components in each shock are conditionally distributed as binomial distribution. Under this condition,  mixture representations for the reliability function of the network are obtained and  stochastic and aging properties of the network lifetime are investigated. In particular,  we show that when the shocks arrive according to a non-homogeneous Poisson process (NHPP) and   the arrival time of the first shock has increasing hazard rate average (IHRA), then the distribution of the network lifetime  is IHRA. Section 4 is devoted  to the reliability of the network under fatal shocks. It is assumed that at time of occurrence of a shock at least one component of the network fails. Under  this assumption a mixture representation for the network reliability is obtained based on a new variant of the notion of signature.

\section{Network reliability  under shock models}\label{2}
In this section, we assume that the network is subject to shocks that appear according to a counting process.
In reality, this may happen as a result of a sequence of heavy road accidents, floods, earthquakes, fires etc. We  explore  the reliability of the network   where  each shock  may lead to the failure of the network components. Before doing so,   we define a variant  of the concept of signature which avoids the restriction of not allowing the ties. To be more precise, let $X_1,..., X_n$ be  i.i.d  random variables representing the component lifetimes of the network. One of the assumptions that is necessary  to define the notion of signature  is that there do not exist  ties between $X_1,\dots, X_n$, i.e. $P(X_i=X_j)=0$ for every $i\neq j$ (see, for example,  \cite{sn}). However, in real life situation, this is possible that more than one component may  fail at each time instant,  i.e.  ties may  exist between $X_1,\dots, X_n$. For example when the network is under shock, each shock may results the failure of more that one component at the same time. Under this assumption, in the sequel, we define a variant of the notion of signature. First let us define the discrete random variable $M$ as the minimum number of  components  that their failures cause    the network failure. Obviously $M$ takes values on $\{1,2,\dots,n\}$.
Suppose further that  $n^*$ is the  number of ways that the components fail in the network and $n_i$ is the number of ways of the order of component failures in  which $M=i$. Assuming that all  the number of ways of the order of component failures are equally likely, we define the {\it "tie signature"} ({\it t-signature})  vector associated to the network as ${\bf s}^{\tau}=(s_1^{\tau},\dots, s_n^{\tau})$  where
\begin{align*}
s^{\tau}_i=\frac{n_i }{n^*}, \qquad i=1,...,n.
\end{align*}
It should be noted  that t-signature, similar to the concept of signature,  depends only on the structure of the network and does not depend on the  random mechanism of the component failures.

In the following example, we compute the t-signature vector for a simple network.

\begin{Example}\label{exa} \em
Consider a network with $3$ links and $3$ nodes depicted in Figure \ref{fig1}.
The links are subjected to failure and nodes $a$ and $c$ are considered as terminals. We  assume that the network is functioning if and only if terminals are connected.
\begin{figure}[h!]
  \centering
  \includegraphics[width=3in,height=3in,keepaspectratio]{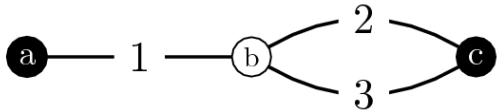}
  \caption{Network with $3$ links and $3$ nodes.}
  \label{fig1}
\end{figure}

Let $\pi$ denote the order of link failures in the network . All possible $\pi$ and the associated  $M$  are presented in Table \ref{tab1}, where the numbers in the braces indicate that the corresponding links failed at the same time. Hence, $n^*=13$ and the elements of the t-signature are calculated as

\begin{table}[]
\begin{center}
\caption{\small All ways of order of links failures }
\label{tab1}
\begin{tabular}{|c|c|c|c|c|c|}
\hline
 $\pi$ & $M$ & $\pi$ & $M$ & $\pi$ & $M$ \\
 \hline
(1,2,3) & 1 &  (\{1,3\},2) & 1 & (\{1,2,3\}) & 1
\\ \hline
(1,3,2) & 1 &  (\{2,3\},1) & 2 &  &
\\ \hline
(2,1,3) & 2 &  (\{1,2\},3) & 1 &  &
\\ \hline
(2,3,1) & 2 &  (3,\{1,2\}) & 2 &  &
\\ \hline
(3,1,2) & 2 &  (2,\{1,3\}) & 2 &  &
\\ \hline
(3,2,1) & 2 &  (1,\{2,3\}) & 1 &  &
\\
\hline
\end{tabular}
\end{center}
\end{table}

\begin{align*}
 s_1^{\tau}=\frac{6}{13}, \qquad s_2^{\tau}=\frac{7}{13},\qquad s_3^{\tau}=0.
\end{align*}
It is interesting to note that the signature vector of this network equals  ${\bf s}=(\frac{1}{3},\frac{2}{3},0).$
\end{Example}

The following lemma gives a formula for computing $n^*$.
\begin{Lemma}\label{n*}
Let a $n$-component network be under shocks. Let $n^*$ be the number of   ways that the components the network fail under the assumption of ties. Then
\begin{align*}
n^*=\sum_{j=1}^{n}\sum_{k=0}^{j}{j \choose k} (-1)^k (j-k)^n.
\end{align*}
\begin{proof}
We use the following combinatorial argument: The number of ways to put $n$ distinct objects into $m$ distinct boxes, $n>m$, such that every box contains at least one object is
$$\sum_{k=0}^{m}{m \choose k} (-1)^k (m-k)^n.$$
Let $J$ be the number of shocks such that in occurrence of each one at least one component fails. It is clear that $J$ takes value on $\{1,...,n\}$. If  $J=j$, is fixed,  the number of ways that  the components numbers $\{1,2,...,n\}$  can be under  $j$ shocks is the same as the number of ways to put $n$ distinct objects into $j$ distinct boxes such that every box contains at least one object. Thus, summing up over $j$, $j=1,\dots,n$, we get
\begin{align*}
n^*=\sum_{j=1}^{n}\sum_{k=0}^{j}{j \choose k} (-1)^k (j-k)^n.
\end{align*}
\end{proof}
\end{Lemma}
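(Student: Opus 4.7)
The plan is to set up a bijection between the failure scenarios counted by $n^{*}$ and surjective maps from the component set onto an index set of shocks, and then invoke the standard inclusion--exclusion count of surjections. First I would observe that each way in which the $n$ components of the network fail (allowing ties) is uniquely determined by two pieces of data: the number $j$ of distinct shocks that actually cause at least one failure, and, for each component, the index (from $1$ to $j$) of the shock that destroys it. The ties arise precisely when two or more components share the same index. A shock that destroys no component is not recorded in the scenario, so every index in $\{1,2,\dots,j\}$ must be hit by at least one component. Thus, for fixed $j$, the failure scenarios are in one-to-one correspondence with surjections $f\colon \{1,\dots,n\}\to\{1,\dots,j\}$.

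Next I would count these surjections by inclusion--exclusion. Let $A_{i}$ denote the set of functions $f\colon\{1,\dots,n\}\to\{1,\dots,j\}$ whose image misses the value $i$; there are $(j-1)^{n}$ such functions, and more generally the number of functions missing a prescribed set of $k$ values is $(j-k)^{n}$. Inclusion--exclusion on $A_{1}\cup\cdots\cup A_{j}$ gives the number of surjections as
\begin{equation*}
\sum_{k=0}^{j}\binom{j}{k}(-1)^{k}(j-k)^{n},
\end{equation*}
which is the standard formula (equivalently $j!\,S(n,j)$ with $S(n,j)$ the Stirling number of the second kind).

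Finally I would sum the above over the admissible values of $j$. Since at least one shock must occur ($j\ge 1$) and a surjection onto a $j$-set from an $n$-set requires $j\le n$, the range is $j\in\{1,\dots,n\}$, yielding
\begin{equation*}
n^{*}=\sum_{j=1}^{n}\sum_{k=0}^{j}\binom{j}{k}(-1)^{k}(j-k)^{n},
\end{equation*}
as claimed. As a sanity check I would verify the formula against Example~\ref{exa}: with $n=3$ the sum equals $1+6+6=13$, matching the enumeration in Table~\ref{tab1}.

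There is no real obstacle here; the only modeling step that requires any care is recognizing that a failure scenario with ties is exactly an ordered set partition of the components into non-empty blocks, i.e.\ a surjection onto $\{1,\dots,j\}$ for some $j$. Once that identification is made explicit, the rest is a textbook inclusion--exclusion computation.
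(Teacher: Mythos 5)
Your proof is correct and follows essentially the same route as the paper: identifying a tied failure scenario with a surjection of the $n$ components onto $j$ ordered shocks (the paper phrases this as placing $n$ distinct objects into $j$ distinct nonempty boxes), applying the standard inclusion--exclusion count of surjections, and summing over $j=1,\dots,n$. Your added verification that the formula gives $1+6+6=13$ for $n=3$, matching Table~\ref{tab1}, is a nice touch but does not change the argument.
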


\quad Consider a two-state network with lifetime $T$ which is subject to shocks, where shocks  appear according to a counting process, denoted by $\{\xi(t),t>0\}$, at random time instants $\vartheta_1,\vartheta_2,\dots$.  We assume that each shock may lead to component failures and further   assume that the network finally fails  by one of these shocks. Let random variable $W_i$, $i=1,\dots,n$, denote the number of components that fail at the $i$th shock and $W_0\equiv 0$. If $N(t)$ denotes the total number of components that fail up to time $t$, then $N(t)$ takes values on $\{1,2,\dots,n\}$ and
\begin{align*}
  N(t)=\sum_{i=0}^{\xi(t)} W_i.
\end{align*}
Under the assumption that the process of occurrence of the shocks is independent of the number of failed components,  using the law of total probability, the distribution function of  $N(t)$ can be written as
\begin{align}\label{nt}
  P(N(t)\leq x)= &\sum_{k=0}^{\infty} P(N(t) \leq x | \xi(t)=k) P(\xi(t)=k) \nonumber\\
  =&\sum_{k=0}^{\infty} P(\sum_{i=0}^{\xi(t)} W_i \leq x | \xi(t)=k) P(\xi(t)=k) \nonumber\\
    =& \sum_{k=0}^{\infty} H_{k}(x) P(\xi(t)=k),
\end{align}
where $H_k(x)$ denotes  the distribution function of r.v. $\sum_{i=0}^{k} W_i$.
By these assumptions,  the network  fails if $N(t)\geq M$. Hence, the network lifetime can be defined as
\[T\equiv\min_{t>0}\{N(t)\geq M\}\]
and thus,  we have $P(T>t)=P(N(t)<M)$.   Therefore, using the law of total probability and the fact that the total number of components that fail up to time $t$ is independent of the t-signature, we get
\begin{align}\label{M}
P(T>t)=&P(N(t)<M)\nonumber\\
=&\sum_{i=1}^n P(M=i) P(N(t)\leq i-1)\nonumber\\
=&\sum_{i=1}^n s_i^{\tau} P(N(t)\leq i-1).
\end{align}

Let $\bar{S}_j^{\tau}=\sum_{i=j+1}^n s_i^{\tau}$, then using (\ref{nt}) and (\ref{M}), we have
\begin{align}
P(T>t)=&\sum_{i=1}^n s_i^{\tau} \sum_{k=0}^{\infty} H_{k}(i-1) P(\xi(t)=k)\nonumber\\
=&\sum_{k=0}^{\infty} \beta_{k,n} P(\xi(t)=k),\label{eq-org}
\end{align}
where, for $k=0,1,\dots$
\begin{align}\label{b}
\beta_{k,n}=&\sum_{i=1}^n s_i^{\tau} H_{k}(i-1)\nonumber\\
=& \sum_{j=0}^{n-1} \bar{S}_j^{\tau} P(\sum_{i=0}^k W_i=j).
\end{align}

 In the following proposition  some properties of $\beta_{k,n}$ are investigated.
\begin{Proposition}\label{pro}
Let $\vartheta_{1},\vartheta_{2},...$ be the epoch times corresponding to $\{\xi(t),t>0\}$. Then
 $$\beta_{k,n}=P(T>\vartheta_k),$$
 and as a function of $k$, $\beta_{k,n}$  is a survival function  with probability mass function  ${\bf b}_n=(b_{1,n},b_{2,n},...)$, where
 $ b_{k,n}=P(T=\vartheta_k).$
\end{Proposition}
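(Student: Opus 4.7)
The plan is to (a) identify $\beta_{k,n}$ with $P(T>\vartheta_k)$ directly from the model, and then (b) verify the three conditions that characterise a discrete survival function on $\{0,1,2,\ldots\}$, after which the probability mass function is obtained by differencing.

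For (a), I would start from the defining relation $T=\min_{t>0}\{N(t)\ge M\}$, which gives $\{T>\vartheta_k\}=\{N(\vartheta_k)<M\}$. Since $\xi(\vartheta_k)=k$ almost surely, one has $N(\vartheta_k)=\sum_{i=0}^k W_i$, a random variable whose cdf is $H_k$ by definition. Conditioning on $M$ and invoking independence between $M$ and the shock/failure mechanism (the t-signature being a purely structural quantity, just as the ordinary signature is in representation (\ref{sigj})) yields
\[
P(T>\vartheta_k)=\sum_{i=1}^{n} P(M=i)\,P\Bigl(\sum_{j=0}^{k}W_j\le i-1\Bigr)=\sum_{i=1}^{n} s_i^{\tau} H_k(i-1),
\]
which is exactly the expression for $\beta_{k,n}$ in (\ref{b}). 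This is essentially the same conditioning idea used to derive (\ref{M}) and (\ref{eq-org}), but with $M$ as the mixing variable rather than $\xi(t)$.

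For (b), setting $\vartheta_0:=0$, I would verify three items. First, $\beta_{0,n}=1$: because $W_0\equiv 0$, we have $H_0(i-1)=1$ for every $i\ge 1$, so $\beta_{0,n}=\sum_{i=1}^{n} s_i^{\tau}=1$ (equivalently, from the second form in (\ref{b}), $\beta_{0,n}=\bar S_0^{\tau}=1$). Second, monotonicity: the epochs $\vartheta_k$ are strictly increasing in $k$, hence $\{T>\vartheta_{k+1}\}\subseteq\{T>\vartheta_k\}$, forcing $\beta_{k+1,n}\le\beta_{k,n}$. Third, $\beta_{k,n}\to 0$ as $k\to\infty$: the model assumption that the network finally fails at one of the arriving shocks means $T<\infty$ a.s., and since $\vartheta_k\uparrow\infty$ almost surely this gives $P(T>\vartheta_k)\to 0$. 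These three facts make $k\mapsto\beta_{k,n}$ a bona fide survival function on $\{0,1,2,\ldots\}$ with mass function $b_{k,n}=\beta_{k-1,n}-\beta_{k,n}=P(\vartheta_{k-1}<T\le\vartheta_k)$ for $k\ge 1$. Because by construction $T$ is supported on $\{\vartheta_1,\vartheta_2,\ldots\}$, this interval probability collapses to $P(T=\vartheta_k)$.

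The only step that requires any care is the third condition in (b); I would simply remark that the hypothesis \emph{``the network finally fails by one of these shocks''} is precisely the a.s.\ finiteness of $T$, so the limit is not a genuine obstacle but a consequence of the standing modelling assumption on $\{\xi(t),t>0\}$ and the failure sizes $W_i$. Everything else is bookkeeping from the derivation that already produced (\ref{eq-org}).
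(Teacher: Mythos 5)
Your proposal is correct and follows essentially the same route as the paper: condition on $M$, use the independence of $M$ from the $W_i$'s to identify $P(T>\vartheta_k)$ with $\sum_i s_i^{\tau}H_k(i-1)=\beta_{k,n}$, then check $\beta_{0,n}=1$, monotonicity via $\{T>\vartheta_{k+1}\}\subseteq\{T>\vartheta_k\}$, and the vanishing limit before differencing to get $b_{k,n}=P(T=\vartheta_k)$. Your added remarks on why the limit is zero (a.s.\ finiteness of $T$) and why the interval probability collapses ($T$ supported on the epochs) merely make explicit what the paper leaves implicit.
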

\begin{proof}
We have
\begin{align}\label{eee}
P(T>\vartheta_k)&=\sum_{m=1}^{n}P(T>\vartheta_k|M=m)P(M=m)\nonumber\\
&=\sum_{m=1}^{n}s_m^{\tau} P(\sum_{i=1}^{k}W_i<m|M=m)\nonumber\\
&=\sum_{m=1}^{n}s_m^{\tau} P(\sum_{i=1}^{k}W_i<m)\nonumber\\
&=\sum_{m=1}^{n}s_m^{\tau} H_{k}(m-1)=\beta_{k,n},
\end{align}
where  the first equality follows from the fact that the lifetime of network is more than the arrival time of the $k$th shock if and only if the number of failed components in the time of $k$th shock is less than $m$ and  the second equality follows  because  the random variable $M$ is independent of $W_1,W_2,\dots.$ Since  $\vartheta_0\equiv 0$, and the network fails finally with one of the shocks, we have
\[\beta_{0,n}=1, \qquad \lim_{k\rightarrow \infty}\beta_{k,n}=0.\]
  On the other hand, since $\{T>\vartheta_{k+1}\}\subseteq \{T>\vartheta_k\}$, we get $\beta_{k+1}\leq \beta_k$ and hence $\beta_{k,n}$ is  decreasing in $k$.  Thus  $\beta_{k,n}$, as a function of $k$,  $k=0,1,\dots$, has properties of a discrete survival function.
 Let  ${\bf b}_n=(b_{1,n},b_{2,n},...)$ be  the probability mass function corresponding to $\beta_{k,n}$. That is,  $b_{k,n}=\beta_{k-1,n}-\beta_{k,n}$. Then, based on (\ref{eee}), we have
 \begin{align*}
b_{k,n}&= P(T>\vartheta_{k-1})-P(T>\vartheta_k)=P(T=\vartheta_k).
\end{align*}
\end{proof}
 From Proposition \ref{pro}, the $k$th element in ${\bf b}_n$, $b_{k,n}$, denotes  the probability that the network fails at the time of occurrence of the $k$th shock, $\vartheta_k$. We call, throughout the paper, the vector ${\bf b}_n$ as the vector of shock t-signature (ST-signature) of the network.

In the following, we show that the reliability function of the network lifetime can be represented as the reliability functions of epoch times $\vartheta_{i}$. For the counting process $\{\xi(t),t>0\}$, it is known that $\{\xi(t)=k\}$ if and only if $\{\vartheta_k\leq t <\vartheta_{k+1}\}$ where  $\vartheta_0\equiv 0$. Using this fact, we have
       \begin{align}
P(T>t)=& \sum_{k=0}^\infty \beta_{k,n} P(\xi(t)=k) \nonumber\\
=& \sum_{k=0}^\infty \beta_{k,n} P(\vartheta_k\leq t <\vartheta_{k+1})\nonumber\\
=& \sum_{k=0}^\infty \beta_{k,n} \left(P(\vartheta_{k+1}> t )-P(\vartheta_{k}> t)\right)\nonumber\\
=& \sum_{k=1}^\infty \beta_{k-1,n} P(\vartheta_k>t)-\sum_{k=1}^\infty \beta_{k,n} P(\vartheta_k>t)\nonumber\\
=& \sum_{k=1}^{\infty} b_{k,n} P(\vartheta_k>t).\label{rep-t1}
\end{align}

\begin{Remark}
{\rm  The model in \eqref{eq-org}, which arises in reliability theory, is known as the damage shock model (see \cite{barlow}, p. 92).   Let a device be subject to shocks appearing  randomly over time. Assuming that the device has a probability $\bar{P}(k)$ of surviving the first $k$ shocks, $k=0,1,...$, and $N(t)$ denotes the number of shocks that the device is subject to in the  interval $[0,t]$, then the reliability of the device, $\bar{H}(t)$, at time $t$  is
\begin{eqnarray*}
\bar{H}(t)=\sum_{k=0}^{\infty}P(N(t)=k)\bar{P}(k), \quad t\geq 0.
\end{eqnarray*}
Various properties of this model have been  explored  by different  authors; see, for example, \cite{ebrahimi}-\cite{pell1}. }
\end{Remark}

The hazard (failure) rate of a random variable $X$ or its distribution $F$ with density function $f$ is defined by $\lambda_F(x)={f(x)}/{\bar{F}(x)}$, where $\bar{F}=1-F$ is the survival function of $X$. The distribution function $F$  is said to be increasing hazard rate (IHR) if ${\bar{F}(t+x)}/{\bar{F}(t)}$ is decreasing in $t$ whenever $x>0$. From representation (\ref{rep-t1}), the hazard rate of the network can be written as
 \begin{eqnarray*}
\lambda(t)=\sum_{k=1}^{\infty} p_{k,n}(t) \lambda_k(t),
 \end{eqnarray*}
where $\lambda_k(t)$ is the hazard rate of $\vartheta_k$ and
\begin{eqnarray*}\label{pk}
p_{k,n}(t)=\frac{b_{k,n} P(\vartheta_k>t)}{\sum_{j=1}^{\infty} b_{j,n} P(\vartheta_j>t)}.
\end{eqnarray*}
It is interesting to note that $p_{k,n}(t)$ can be written as $p_k(t)=P(T=\vartheta_k|T>t).$ This is true because
\begin{align*}
P(T=\vartheta_k|T>t)&=\frac{P(T>t|T=\vartheta_k)P(T=\vartheta_k)}{P(T>t)}\\
&=\frac{b_{k,n} P(\vartheta_k>t)}{\sum_{j=1}^{\infty}b_{j,n} P(\vartheta_j>t)}= p_{k,n}(t),
 \end{align*}
where the second equality follows from the fact that  $\{\vartheta_k>t\}$ and $\{T=\vartheta_k\}$ are independent.

In the following, we make some stochastic comparisons between the performance of two networks,  where the components of the networks are subject to failure according to different or same counting processes.  We first use the following ordering definitions.
\begin{Definition} Let $X$ and $Y$ be two random variables with survival functions $\bar{F}$ and $\bar{G}$ having
density functions $f$ and $g$.
\begin{itemize}
\item[(a)] $X$ or $F$ is said to be stochastically less than or equal to $Y$ or $G$, denoted by $X\leq_{st} Y$ or $F\leq_{st}G$, if $\bar{F}(x)\leq \bar{G}(x)$ for all $x$.
\item[(b)] $X$ or $F$ is said to be less than or equal to a random variable $Y$ or $G$ in hazard rate order, denoted by $X\leq_{hr} Y$ or $F \leq_{hr} G$,  if $\dfrac{\bar{G}(x)}{\bar{F}(x)}$ increases in $x$.
\item[(c)] $X$ or $F$ is said to be less than or equal to a random variable $Y$ or $G$ in likelihood ratio order,
   denoted by $X \leq_{lr}Y$ or $F\leq_{lr} G$, if $\dfrac{g(x)}{f(x)}$ is an increasing function of $x$.
\end{itemize}
\end{Definition}
We have now the following theorem.
\begin{Theorem}\label{th-st-gen}
Consider two networks  consisting of $n_1$ and $n_2$ components and lifetimes $T_1$ and $T_2$, respectively.
Suppose that the  components of the $i$th network are subject  to shocks which  appear according to  counting  process $\{\xi_i(t),t\geq 0\}$, $i=1,2$.  Let  the ${\bf b}^{(i)}_{n_i}=( b^{(i)}_{1,n_i}, b^{(i)}_{2,n_i},...)$, $i=1,2$  denote the ST-signature of the $i$th network. If $\xi_1(t)\geq_{st}\xi_{2}(t)$ and ${\bf b}^{(1)}_{n_1}\leq_{st}{\bf b}^{(2)}_{n_2}$ then $T_1\leq_{st} T_2$.
\end{Theorem}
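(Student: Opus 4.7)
My plan is to exploit the mixture representation \eqref{eq-org}, which rewrites both survival functions as
\[
P(T_i>t)=\sum_{k=0}^{\infty}\beta^{(i)}_{k,n_i}\,P(\xi_i(t)=k)=E\!\left[\beta^{(i)}_{\xi_i(t),n_i}\right],\qquad i=1,2.
\]
The desired inequality $P(T_1>t)\le P(T_2>t)$ will then follow from two chained comparisons on these expectations: first I will swap $\xi_1(t)$ for $\xi_2(t)$ inside the first expectation (using the counting-process assumption), and then I will swap the sequence $\beta^{(1)}$ for $\beta^{(2)}$ (using the ST-signature assumption).

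First I would record the two ingredients needed. Ingredient one is that, by Proposition \ref{pro}, the sequence $k\mapsto \beta^{(i)}_{k,n_i}$ is a discrete survival function, hence non-increasing in $k$ and bounded in $[0,1]$. Ingredient two is to translate ${\bf b}^{(1)}_{n_1}\le_{st}{\bf b}^{(2)}_{n_2}$ into a pointwise statement on the $\beta$'s. Letting $B_i$ be a discrete random variable with probability mass function ${\bf b}^{(i)}_{n_i}$, Proposition \ref{pro} gives $P(B_i>k)=\beta^{(i)}_{k,n_i}$, so by definition of stochastic order the hypothesis is equivalent to
\[
\beta^{(1)}_{k,n_1}\le \beta^{(2)}_{k,n_2}\qquad\text{for every }k=0,1,2,\dots.
\]

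With these two facts in hand, I would assemble the proof as the chain
\[
P(T_1>t)=E\!\left[\beta^{(1)}_{\xi_1(t),n_1}\right]\le E\!\left[\beta^{(1)}_{\xi_2(t),n_1}\right]\le E\!\left[\beta^{(2)}_{\xi_2(t),n_2}\right]=P(T_2>t).
\]
The first inequality is the standard characterization of $\le_{st}$: under $\xi_1(t)\ge_{st}\xi_2(t)$, the expectation of any bounded non-increasing function is smaller for $\xi_1(t)$ than for $\xi_2(t)$, and this is applied to the function $k\mapsto\beta^{(1)}_{k,n_1}$ (which is non-increasing by ingredient one). The second inequality is obtained by integrating the pointwise bound of ingredient two against the law of $\xi_2(t)$. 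Since $t>0$ is arbitrary, this yields $T_1\le_{st}T_2$.

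I do not anticipate a genuine obstacle: the only subtlety is to make sure the function applied under $\le_{st}$ is both bounded and monotone in the right direction, which is exactly what Proposition \ref{pro} supplies. The one spot worth writing carefully is the direction of the monotonicity-based swap: because $\xi_1(t)$ is stochastically \emph{larger} and $\beta^{(1)}$ is \emph{decreasing}, the resulting inequality goes in the direction that makes the chain close up to give $T_1\le_{st}T_2$; getting this sign right is the only thing I would double-check before declaring the argument complete.
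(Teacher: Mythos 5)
Your proposal is correct and follows essentially the same route as the paper: both write $P(T_i>t)$ via the mixture representation \eqref{eq-org} and close the argument with the same two-step chain, first replacing $\xi_1(t)$ by $\xi_2(t)$ using $\xi_1(t)\geq_{st}\xi_2(t)$ together with the monotonicity of $k\mapsto\beta^{(1)}_{k,n_1}$, and then replacing $\beta^{(1)}$ by $\beta^{(2)}$ using ${\bf b}^{(1)}_{n_1}\leq_{st}{\bf b}^{(2)}_{n_2}$. Your write-up merely makes explicit (via Proposition \ref{pro}) the two ingredients the paper cites in one line.
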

\begin{proof}
 Take $\beta_{k,n_i}^{(i)}=\sum_{j=k+1}^{\infty}b_{j,n_i}^{(i)}$, $i=1,2$.  Then, using \eqref{eq-org}, we have
 \begin{align*}
P(T_1>t)&=\sum_{k=0}^{\infty} \beta^{(1)}_{k,n_1} P(\xi_1(t)=k)\\
&\leq \sum_{k=0}^{\infty} \beta^{(1)}_{k,n_1} P(\xi_2(t)=k)\\
&\leq \sum_{k=0}^{\infty} \beta^{(2)}_{k,n_2} P(\xi_2(t)=k)=P(T_2>t),
\end{align*}
where  the first inequality follows from the facts that $\beta_{k,n_i}$,  $i=1,2$ is decreasing in $k$ and the assumption $\xi_1(t)\geq_{st}\xi_{2}(t)$. The second inequality  follows from the assumption that ${\bf b}^{(1)}_{n_1}\leq_{st}{\bf b}^{(2)}_{n_2}$.
\end{proof}

\begin{Corollary}\label{th-st-order}
In Theorem \ref{th-st-gen}, assume that the components of the two networks are subject to failure  by shocks appear according to renewal  processes $\{\xi_1(t),t\geq 0\}$ and $\{\xi_2(t),t\geq 0\}$, respectively. Let $X_{i,j}$, $i=1,2$, $j=1,2,...$, denote the time between the $(j-1)$th and $j$th shocks in the $i$th network.  Then the result of the theorem remains valid if we replace the condition  $\xi_1(t)\geq_{st}\xi_{2}(t)$ with  $X_{1,1}\leq_{st}X_{2,1}$.
 \end{Corollary}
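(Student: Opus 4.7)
My plan is to reduce this corollary to Theorem \ref{th-st-gen} by showing that the hypothesis $X_{1,1}\leq_{st}X_{2,1}$ on the first inter-arrival times, combined with the renewal structure, already forces $\xi_1(t)\geq_{st}\xi_2(t)$ for every $t\geq 0$. Once this implication is established, the ST-signature comparison ${\bf b}^{(1)}_{n_1}\leq_{st}{\bf b}^{(2)}_{n_2}$ is still assumed, so the conclusion $T_1\leq_{st}T_2$ follows verbatim from Theorem \ref{th-st-gen}.

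The steps, in order, are as follows. First, I would invoke the renewal assumption: for each fixed $i$, the inter-arrival times $X_{i,1},X_{i,2},\ldots$ are i.i.d., so the assumption $X_{1,1}\leq_{st}X_{2,1}$ immediately yields $X_{1,j}\leq_{st}X_{2,j}$ for every $j\geq 1$. Second, I would use the standard closure property of $\leq_{st}$ under convolution of independent random variables to obtain, for each $k\geq 1$,
\begin{equation*}
\vartheta^{(1)}_k=\sum_{j=1}^{k}X_{1,j}\ \leq_{st}\ \sum_{j=1}^{k}X_{2,j}=\vartheta^{(2)}_k,
\end{equation*}
where $\vartheta^{(i)}_k$ is the $k$th epoch time of the $i$th process. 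Third, I would translate this ordering on epoch times into the reversed ordering of the counting processes through the duality $\{\xi_i(t)\geq k\}=\{\vartheta^{(i)}_k\leq t\}$, which gives
\begin{equation*}
P(\xi_1(t)\geq k)=P(\vartheta^{(1)}_k\leq t)\geq P(\vartheta^{(2)}_k\leq t)=P(\xi_2(t)\geq k)
\end{equation*}
for every $k\geq 0$ and $t\geq 0$, i.e., $\xi_1(t)\geq_{st}\xi_2(t)$.

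Finally, with both hypotheses of Theorem \ref{th-st-gen} in force, the theorem yields $T_1\leq_{st}T_2$ and the proof is complete. There is no genuine obstacle here: the argument is essentially a routine application of the well-known fact that stochastically larger inter-arrival times produce stochastically smaller renewal counts. The only point requiring a little care is to make explicit that the i.i.d.\ structure of renewal processes allows the single-sample hypothesis $X_{1,1}\leq_{st}X_{2,1}$ to be promoted to the joint convolution comparison used in the second step.
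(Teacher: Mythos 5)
Your proposal is correct and follows essentially the same route as the paper: both promote $X_{1,1}\leq_{st}X_{2,1}$ to the convolution comparison $\vartheta^{(1)}_k\leq_{st}\vartheta^{(2)}_k$ (the paper cites Theorem 1.A.3(b) of Shaked and Shanthikumar for this closure property) and then use the duality $\{\vartheta_k\leq t\}=\{\xi(t)\geq k\}$ to conclude $\xi_1(t)\geq_{st}\xi_2(t)$ before invoking Theorem \ref{th-st-gen}. No gaps.
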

 \begin{proof}
 Let $\vartheta_{i,k}=\sum_{j=1}^{k}X_{i,j},i=1,2,k=1,2,...$. Using Theorem 1.A.3 (b) of \cite{shaked}, the condition $X_{1,1}\leq_{st}X_{2,1}$ implies that $\vartheta_{1,k}\leq_{st}\vartheta_{2,k}$. Now the result follows from Theorem \ref{th-st-gen} and the fact that  for any counting process $\{\xi(t),t\geq 0\}$ with occurrence times $\vartheta_1,\vartheta_2,\dots$, we have  $\{\vartheta_n\leq t\}$ if and only if $\{\xi(t)\geq n\}$.
\end{proof}

Before presenting  the next theorem, we give the following definition (see, \cite{karlin}).
\begin{Definition}\label{def2}
    {\rm Let $A$ and $B$ be two subsets of the real line. A non-negative function $K$ defined on $A\times B$ is said to be totally positive of order 2,  denoted TP$_2$, if for all $a_1 < a_2$, and $b_1 < b_2$, ($a_i \in A$, $b_i \in B$, $i=1,2$),
    $$K(a_2,b_2)K(a_1,b_1)-K(a_1,b_2)K(a_2,b_1)\geq 0.$$ }
  \end{Definition}
In the next theorem, we show   when ST-signature vectors  of two networks are {\it hr} ordered then the lifetimes of the networks are also ordered in {\it hr} ordering.
\begin{Theorem}\label{th-hr-gen}
 Assume that the assumptions of Theorem \ref{th-st-gen} are  met and  that   the  components of  two  networks  are  subject  to failure by shocks appear  according to the same counting  process $\{\xi(t),t\geq 0\}$.  
If  ${\bf b}^{(1)}_{n_1}\leq_{hr}{\bf b}^{(2)}_{n_2}$ and  $P(\xi(t)=k)$ is $TP_2$ in $k\in\{0,1,\dots\}$ and $t>0$,  then $T_1\leq_{hr}T_2$.
\end{Theorem}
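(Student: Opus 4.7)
The plan is to reduce the claim to the Basic Composition Formula for totally positive functions. Recall from representation \eqref{eq-org} that
\begin{equation*}
P(T_i > t) = \sum_{k=0}^{\infty} \beta_{k,n_i}^{(i)}\, P(\xi(t)=k),\qquad i=1,2,
\end{equation*}
so $P(T_i>t)$ is obtained by ``integrating'' the discrete kernel $\beta_{k,n_i}^{(i)}$ against the kernel $P(\xi(t)=k)$ over the intermediate variable $k$. Showing $T_1\leq_{hr} T_2$ amounts to showing that the $2\times\infty$ array $P(T_i>t)$ is TP$_2$ in $(i,t)$, i.e.\ that $P(T_2>t)/P(T_1>t)$ is non-decreasing in $t$.

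First I would translate the hypothesis ${\bf b}^{(1)}_{n_1}\leq_{hr} {\bf b}^{(2)}_{n_2}$ into a TP$_2$ statement about the survival functions. By the discrete version of the hazard-rate order, this hypothesis is equivalent to $\beta_{k,n_2}^{(2)}/\beta_{k,n_1}^{(1)}$ being non-decreasing in $k$, which is exactly the statement that the array $K(i,k):=\beta_{k,n_i}^{(i)}$ is TP$_2$ on $\{1,2\}\times\{0,1,2,\dots\}$. Combined with the assumed TP$_2$ property of $L(k,t):=P(\xi(t)=k)$ on $\{0,1,2,\dots\}\times(0,\infty)$, I would then invoke the Basic Composition Formula (Karlin, cf.\ Definition \ref{def2}): the composition
\begin{equation*}
M(i,t) = \sum_{k=0}^{\infty} K(i,k)\, L(k,t) = P(T_i>t)
\end{equation*}
is TP$_2$ in $(i,t)$, and this is exactly the conclusion $T_1\leq_{hr} T_2$.

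If one preferred a self-contained argument without citing the composition formula, the same fact can be obtained by a direct weighted-average argument: write $g(k)=\beta_{k,n_2}^{(2)}/\beta_{k,n_1}^{(1)}$, which is non-decreasing by the hypothesis, and express
\begin{equation*}
\frac{P(T_2>t)}{P(T_1>t)} = \sum_{k=0}^{\infty} g(k)\, w_t(k),\qquad w_t(k):=\frac{\beta_{k,n_1}^{(1)} P(\xi(t)=k)}{\sum_{j} \beta_{j,n_1}^{(1)} P(\xi(t)=j)}.
\end{equation*}
The TP$_2$ property of $P(\xi(t)=k)$ in $(k,t)$ implies that the probability weights $w_t$ are ordered in the likelihood-ratio order, hence stochastically increasing, in $t$; taking the expectation of the non-decreasing function $g$ under a stochastically increasing family then yields the desired monotonicity.

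The main obstacle I anticipate is the careful bookkeeping at indices $k$ where one of the survival functions $\beta_{k,n_i}^{(i)}$ vanishes (so ratios are $0/0$ or undefined); this is handled by adopting the standard convention that $0\cdot\infty=0$ in the TP$_2$ inequality and checking monotonicity on the relevant supports, but it is a purely technical point and not conceptually substantive.
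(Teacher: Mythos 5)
Your proposal is correct and follows essentially the same route as the paper: translate ${\bf b}^{(1)}_{n_1}\leq_{hr}{\bf b}^{(2)}_{n_2}$ into the TP$_2$ property of $\beta_{k,n_i}^{(i)}$ in $(i,k)$, and compose with the TP$_2$ kernel $P(\xi(t)=k)$ via Karlin's basic composition formula to get that $P(T_i>t)$ is TP$_2$ in $(i,t)$. The self-contained weighted-average argument you sketch is a nice elementary substitute for the composition formula, but it is not needed and does not change the substance.
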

\begin{proof}  Let  $\beta_{k,n_i}^{(i)}=\sum_{j=k+1}^{\infty}b_{j,n_i}^{(i)},i=1,2$.
The assumption ${\bf b}^{(1)}_{n_1}\leq_{hr}{\bf b}^{(2)}_{n_2}$ implies that ${\beta_{k,n_2}^{(2)}}/{\beta_{k,n_1}^{(1)}}$ is increasing in $k$. Then, according to Definition \ref{def2}, it can be concluded that $\beta_{k,n_i}^{(i)}$ is $TP_2$ in $k\in\{0,1,...\}$ and $i\in{1,2}$.
Thus, if  $P(\xi(t)=k)$ is $TP_2$ in $k\in\{0,1,...\}$ and $t>0$, then from basic decomposition formula (see, \cite{karlin}), we get
$$P(T_i>t)=\sum_{k=1}^{\infty} \beta^{(i)}_{k,n_i} P(\xi(t)=k)$$
is $TP_2$ in $i\in\{1,2\}$, and $t>0$ which in turn implies that $T_1\leq_{hr}T_2$.
\end{proof}

\begin{Remark}
{\rm In Theorem \ref{th-st-gen}, if we assume that the components of two networks fail  by shocks appear according to the same renewal  processes based on i.i.d. r.v.s  $X_{i},i=1,2,...$, then under the assumption that  ${\bf b}^{(1)}_{n_1}\leq_{hr} {\bf b}^{(2)}_{n_2} $  and that $X_{1}$ has increasing hazard rate,  we have $T_1\leq_{hr}T_2$. This is true  because when $X_{1}$ has increasing hazard rate then $\vartheta_{k}\leq_{hr}\vartheta_{k+1},k=1,2,..$ and hence, the required result follows from the representation  (\ref{rep-t1}) and Theorem 1.B.14 of \cite{shaked}. Also if   $X_{1}$ has log-concave density function, then  $\vartheta_{k}\leq_{lr}\vartheta_{k+1},k=1,2,...$. Thus using  Theorem 1.C.17 of \cite{shaked}, if   ${\bf b}^{(1)}_{n_1}\leq_{lr} {\bf b}^{(2)}_{n_2} $ and $X_{1}$ has log-concave density function then   $T_1\leq_{lr}T_2$. }
\end{Remark}

\section{A binomial based model}
 In this section, we consider the shock model is presented in Section 2 and assume that   the number of component failures at each shock follows a binomial distribution.  Suppose that when a shock arrives each component fails with probability $p$.  Assuming  that the components fail independent of each other,  the number of failed components in the first shock, $W_1$, has binomial distribution $b(n,p)$, where $n$ is the number of components in the network. Suppose that, the number of failed components in the $i$th shock, $W_i$, $i\geq 2$, depends only on  $W_1,...,W_{i-1}$ through $\sum_{j=1}^{i-1}W_j$ and has binomial distribution $b(n_i,p)$, where $n_i=n-\sum_{j=1}^{i-1}W_j$. In other words,  assume that
\begin{eqnarray}\label{bin1}
P(W_1=k)={n \choose k} p^k q^{n-k},~~~k=0,1,...,n
\end{eqnarray}
and for $i\geq 2$,
\begin{eqnarray}\label{bin2}
P(W_i=k|\sum_{j=1}^{i-1}W_j=w)={n-w \choose k} p^k q^{n-w-k},~~~k=0,...,n-w,~w<n,
\end{eqnarray}
where  $q=1-p$.

Now  we can prove  the following lemma.
\begin{Lemma}\label{lem1}
Under the assumptions  (\ref{bin1}) and (\ref{bin2}), we have
$$P(\sum_{i=1}^{n}W_i=j)={n \choose j} (1-q^k)^j q^{k(n-j)},~~~j=0,...,n,~k=1,2,\dots.$$
\end{Lemma}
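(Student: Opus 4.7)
I would prove the identity (reading it as $P(\sum_{i=1}^{k}W_i=j)=\binom{n}{j}(1-q^k)^j q^{k(n-j)}$, since the $k$ on the right-hand side only makes sense that way) by induction on $k$. The intuition guiding the form of the answer is that assumptions (\ref{bin1})--(\ref{bin2}) describe a mechanism equivalent to each of the $n$ components independently failing at each shock with probability $p=1-q$; consequently a given component survives $k$ shocks with probability $q^k$, and by independence across components the total number of failed components after $k$ shocks should be Binomial$(n,1-q^k)$, which is precisely the stated formula.

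The base case $k=1$ is immediate from (\ref{bin1}). For the inductive step I would apply the total probability law to break the sum at the $(k{+}1)$-th shock,
\begin{align*}
P\Bigl(\sum_{i=1}^{k+1}W_i=j\Bigr)=\sum_{w=0}^{j}P\Bigl(\sum_{i=1}^{k}W_i=w\Bigr)\,P\Bigl(W_{k+1}=j-w\,\Big|\,\sum_{i=1}^{k}W_i=w\Bigr),
\end{align*}
substitute the inductive hypothesis together with (\ref{bin2}) into the summand, use the identity $\binom{n}{w}\binom{n-w}{j-w}=\binom{n}{j}\binom{j}{w}$ to pull $\binom{n}{j}$ outside the sum, and regroup the powers of $q$ so that the inner sum takes the form $\sum_{w=0}^{j}\binom{j}{w}(1-q^k)^w (pq^k)^{j-w}$. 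By the binomial theorem this collapses to $(1-q^k+pq^k)^j=(1-q^{k+1})^j$, where the last equality uses $1-(1-p)q^k=1-q^{k+1}$.

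The only real obstacle is clerical bookkeeping: the factors $q^{k(n-w)}$ coming from the inductive hypothesis and $q^{(n-w)-(j-w)}=q^{n-j}$ coming from (\ref{bin2}) have to be rearranged as $q^{(k+1)(n-j)}\cdot q^{k(j-w)}$ so that the $q^{k(j-w)}$ can be absorbed into $(pq^k)^{j-w}$ and the remaining $q^{(k+1)(n-j)}$ survives as the correct outer power. Nothing conceptual stands in the way; once the Vandermonde-style rewriting and the binomial theorem are applied, the algebraic reduction is essentially forced and closes the induction.
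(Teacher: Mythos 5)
Your proposal is correct and follows essentially the same route as the paper's proof: induction on $k$, conditioning on $\sum_{i=1}^{k}W_i$, the identity $\binom{n}{w}\binom{n-w}{j-w}=\binom{n}{j}\binom{j}{w}$, and a binomial-theorem collapse, and you also rightly note that the statement's left-hand side should read $\sum_{i=1}^{k}W_i$ rather than $\sum_{i=1}^{n}W_i$. Your final step, writing the inner sum as $\sum_{w}\binom{j}{w}(1-q^k)^w(pq^k)^{j-w}=(1-q^{k+1})^j$, is in fact slightly cleaner than the paper's, which detours through the expansion $1-q^k=p\sum_{i=0}^{k-1}q^i$.
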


\begin{proof}
We prove the lemma by induction. For $k=1$ the result is true by relation (\ref{bin1}). Assume that the result is true for $k=m$. That is
$$ P(\sum_{i=1}^{m}W_i=j)= {n \choose j}(1-q^m)^{j} q^{m(n-j)}.$$
Then, for $k=m+1$, we get
\begin{eqnarray*}
P(\sum_{i=1}^{m+1}W_{i}=j)&=& \sum_{k=0}^{j} P(W_{m+1}=j-k|\sum_{i=1}^{m}W_i=k)P(\sum_{i=1}^{m}W_i=k)\\
&=&\sum_{k=0}^{j}  {n-k \choose j-k} {n \choose k} p^{j-k} q^{n-j} (1-q^m)^k q^{m(n-k)}\\
&=&\sum_{k=0}^{j}  {n-k \choose j-k} {n \choose k} p^{j-k} q^{n-j} p^k (\sum_{i=0}^{m-1} q^i)^k q^{m(n-k)}\\
&=&  {n \choose j}p^{j} q^{n(m+1)-j}\sum_{k=0}^{j}  {j \choose k}   (\frac{\sum_{i=0}^{m-1} q^{i}}{q^m})^k\\
&=& {n \choose j}p^{j} q^{n(m+1)-j} (\frac{\sum_{i=0}^{m} q^{i}}{q^m})^j\\
&=& {n \choose j}(1-q^{m+1})^{j} q^{(m+1)(n-j)},
 \end{eqnarray*}
 which is the required result.
\end{proof}
Now, based on the model given in (\ref{eq-org}), the reliability of the network  at time $t$ is
\begin{align}\label{bin}
P(T>t)=\sum_{k=0}^{\infty} \beta^*_{k,n} P(\xi(t)=k),
\end{align}
where $\beta^*_{0,n}=1$, and for $k=1,2,...$
\begin{align}
\beta^*_{k,n}&= \sum_{j=0}^{n-1} {\bar S}_j^{\tau}  {n \choose j} (1-q^k)^j q^{k(n-j)}\label{beta*}\\
&=\sum_{i=1}^n s_i^{\tau} \sum_{j=0}^{i-1} {n \choose j} (1-q^k)^j q^{k(n-j)}\nonumber \\
&=\sum_{m=1}^{n}\sum_{j=n-m}^{n-1}{\bar S}_j^{\tau}  {n \choose j} {j\choose n-m} (-1)^{j-n+m} q^{km}\nonumber.
\end{align}

From representation \eqref{rep-t1}, we have
\begin{eqnarray*}
P(T>t)=\sum_{k=1}^{\infty} b^*_{k,n} P(\vartheta_k>t),
\end{eqnarray*}
where $b^*_{k,n}=(\beta^*_{k-1,n}-\beta^*_{k,n})$.

In the following, we concentrate on a special case where the shocks appear as a nonhomogeneous Poisson process (NHPP). Recall that a counting process $\{\xi(t),t\geq 0\}$ is called a  NHPP if the survival function of arrival time $\vartheta_k$ of the $k$th event is
\begin{align*}
\bar{G}_{k}(t)=\sum_{x=0}^{k-1}\frac{[\Lambda(t)]^x}{x!}e^{-\Lambda(t)}, \ \quad t>0,\ k=1,2,...,
\end{align*}
where $\Lambda(t)=E(N(t))=-\log\bar{G}(t)$, and $\bar{G}(t)$ is the reliability function of the time to the first event. The function $\Lambda(t)$ is called the mean value function (m.v.f.). For more details on the properties of NHPP and related processes, one can see, for example, \cite{Nakagawa1}.

Let us look at the following example.
\begin{Example}
{\rm Consider a series network consisting of $n$ components. Suppose that the network is subject to shocks which appear according to a NHPP with m.v.f. $\Lambda(t)=-\log\bar{G}(t)$. Then under model (\ref{bin}) and noting that the t-signature of a series network is  ${\bf s}^{\tau}=(1,0,0,\dots,0)$, we can  easily see that
\[
\beta^*_{k,n}=q^{kn}, \qquad k=0,1,2,\dots.
\]
Hence,  the reliability of series network is given by
\begin{align*}
P(T>t)=&\sum_{k=0}^{\infty} q^{kn} P(\xi(t)=k)\\
=& \sum_{k=0}^{\infty} q^{kn} e^{-\Lambda(t)} \frac{\left(\Lambda(t)\right)^k}{k!}\\
=& e^{-\Lambda(t)} \sum_{k=0}^{\infty}  \frac{\left( q^{n} \Lambda(t)\right)^k}{k!}\\
=&e^{-\Lambda(t) \left(1-q^n\right)}\\
=&(\bar{G}(t))^{1-q^n}
\end{align*}
Note that if $n$, the number of components of the network, gets large then the reliability of the network tends to $\bar{G}(t)$.}
\end{Example}

In the sequel, we explore some aging  properties of the network lifetime.  First,  recall that a distribution $F$ is said to be increasing hazard rate average (IHRA) if $\left(\bar{F}(t)\right)^{1/t}$ is decreasing in $t>0$. It is well known that the IHR property implies the IHRA  (see \cite{barlow}).

We have the following lemma.

\begin{Lemma}\label{lem2}
$\beta^*_{k,n}$ is IHRA.
\end{Lemma}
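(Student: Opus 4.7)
The plan is to recognize $\beta^*_{k,n}$ as the survival function of the lifetime $T$ of a (possibly mixed) coherent system on i.i.d.\ geometric components, and then close the argument via Barlow--Proschan's closure theorem for the IHRA class. The key input is that the geometric distribution, whose tail $P(Y>k)=q^k$ is log-linear in $k$, is IHR, hence IHRA, as has just been recalled in the text preceding the lemma.

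I would first swap the order of summation in \eqref{beta*} and use the binomial tail identity
\[
\sum_{j=0}^{i-1}\binom{n}{j}(1-q^k)^j q^{k(n-j)} \;=\; P(Y_{i:n}>k),
\]
where $Y_1,\dots,Y_n$ are i.i.d.\ geometric with $P(Y_j>k)=q^k$. This puts $\beta^*_{k,n}$ in the mixture form
\[
\beta^*_{k,n} \;=\; \sum_{i=1}^{n} s_i^{\tau}\, P(Y_{i:n}>k),
\]
so that $\beta^*_{k,n}=P(T>k)$ for $T=Y_{M:n}$, with $M$ independent of the $Y_j$'s and distributed as $\mathbf{s}^{\tau}$. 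For each fixed $i$, the function $k\mapsto P(Y_{i:n}>k)$ is the reliability function of the $(n-i+1)$-out-of-$n$ coherent system on i.i.d.\ IHRA components, so by Barlow--Proschan each summand is IHRA in $k$.

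The main---and only non-routine---step is to pass from IHRA of each summand to IHRA of the $\mathbf{s}^{\tau}$-mixture, and this is the main obstacle: the IHRA class is not closed under arbitrary convex combinations of survival functions, so one cannot simply quote pointwise closure. My plan is to address this either (a) by the mixed-system version of Barlow--Proschan, viewing $T$ as the lifetime of a coherent structure on the i.i.d.\ IHRA components $Y_1,\dots,Y_n$ together with an independent random threshold $M$, or (b) by verifying the starshape inequality $(\beta^*_{k_1,n})^{1/k_1}\ge (\beta^*_{k_2,n})^{1/k_2}$ for integers $k_1\le k_2$ directly from the polynomial representation $\beta^*_{k,n}=\tilde{\phi}_t(q^k)$ in \eqref{beta*}, exploiting the log-affine (memoryless) character of $q^k$ that is precisely the ingredient making Barlow--Proschan's starshape comparison go through for i.i.d.\ components.
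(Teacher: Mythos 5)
Your reduction is the same one the paper uses: formula (\ref{beta}) in the text is exactly your identity $\beta^*_{k,n}=\sum_{i=1}^n s_i^{\tau}P(Y_{i:n}>k)$ written via the incomplete beta integral, so that $\beta^*_{k,n}=h(q^k)$ for the polynomial $h(p)=\sum_{j=0}^{n-1}\bar S_j^{\tau}\binom{n}{j}(1-p)^j p^{n-j}$. You have also put your finger on the right difficulty: the mixture over $i$ is where a naive appeal to IHRA closure breaks down. The problem is that you stop there --- neither of your two escape routes is carried out, and route (a) as stated would fail. There is no ``mixed-system version'' of the Barlow--Proschan starshape inequality: for the two-component mixture polynomial $h(p)=\tfrac12 p+\tfrac12 p^2$ (signature $(\tfrac34,\tfrac14)$) one has $h(p^{1/2})=\tfrac38<\sqrt{5/32}=[h(p)]^{1/2}$ at $p=\tfrac14$, so $h(p^{\alpha})\ge [h(p)]^{\alpha}$ is false for general signature mixtures, and IHRA of each summand $P(Y_{i:n}>k)$ cannot be parlayed into IHRA of the mixture by any generic closure theorem for mixed systems. (Indeed, the corresponding $\beta^*_{k,2}=\tfrac12 q^{2k}+\tfrac12 q^{k}$ satisfies $(\beta^*_{1,2})^2-\beta^*_{2,2}=-\tfrac14 q^2(1-q)^2<0$, violating starshapedness; so some structural hypothesis on $h$ is genuinely needed.)

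What the paper actually does is your route (b), with the essential extra ingredient made explicit: it asserts that $h$ is the \emph{static reliability function of a (coherent) network}, so that Theorem 2.5 of Chapter 4 of Barlow and Proschan applies and gives $h(p^{\alpha})\ge[h(p)]^{\alpha}$ for $0<\alpha<1$; taking $p=q^{k+1}$ and $\alpha=k/(k+1)$ yields $h(q^{k})\ge[h(q^{k+1})]^{k/(k+1)}$, i.e.\ $(\beta^*_{k,n})^{1/k}\ge(\beta^*_{k+1,n})^{1/(k+1)}$. Note that the ``log-affine character of $q^k$'' you invoke is not what makes the argument work --- it only converts starshapedness of $h$ along geometric arguments into starshapedness in $k$; the substantive input is the coherence of the structure underlying $h$, which licenses $h(p^{\alpha})\ge[h(p)]^{\alpha}$. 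To complete your proof you must either justify that the particular polynomial $h$ built from the t-signature is a coherent-system reliability function, or verify the inequality $h(p^{\alpha})\ge[h(p)]^{\alpha}$ for it directly; as it stands, the one step you yourself identify as the main obstacle is left unproved.
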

\begin{proof} In order to prove the result, we must show $(\beta^*_{k,n})^\frac{1}{k}$ is decreasing in $k$ for $k=1,2,\dots$.
Note that  $\beta_{k,n}^{*}$ can be  rewritten as
\begin{align}\label{beta}
    \beta_{k,n}^{*}= \sum_{j=1}^{n} s_{j}^{\tau}\int_{1-q^k}^1 \frac{u^{j-1}(1-u)^{n-j}}{B(j,n-j+1)}du,
\end{align}
which is clearly an increasing function of $q^k$. It is clear from (\ref{beta*}) that  $\beta^*_{k,n}$ is a static reliability function of a network. If we write  $\beta^*_{k,n}=h(q^k)$,  where $h$ is  the reliability function of the network,  then
by choosing  $\alpha=\frac{k}{k+1}$ in Theorem 2.5 of Section 4 of  \cite{barlow}, we conclude that
$$h(q^{(k+1)(\frac{k}{k+1})})\geq h^{\frac{k}{k+1}}(q^{k+1})$$
which is equivalent to say that
$$(\beta^*_{k,n})^{\frac{1}{k}}\geq (\beta^*_{k+1,n})^{\frac{1}{k+1}}.$$
This completes the proof of the lemma.
\end{proof}

The following example shows that, although $\beta^*_{k,n}$  is always IHRA,  but it is not  necessarily IHR.
\begin{Example}\label{examp1} \em
Consider a bridge network pictured in Figure \ref{bridge}. It can be seen that the t-signature of this network is as ${\bf s}^{\tau}=(0,\frac{77}{270},\frac{154}{270},\frac{39}{270},0)$. In order to show that  $\beta^*_{k,n}$  is IHR we have to show, based on the definition of IHR distributions,  that  $\frac{\beta_{k+1}^*}{\beta_k^*}$ is decreasing in $k$.
\begin{figure}[h]
\begin{center}
\begin{tikzpicture}\label{rr}[node distance = 5 cm]
\tikzset{LabelStyle/.style =   {semithick,fill= white, text=black}}
\node[scale=0.9,semithick,shape = circle,draw, fill= black, text= white, inner sep =2pt, outer sep= 0pt, minimum size= 5 pt](A) at (0,0) {a};
\node[scale=0.8,semithick,shape = circle,draw, fill= white, text= black, inner sep =2pt, outer sep= 0pt, minimum size= 5 pt](B) at (1.5,1.5) {b};
\node[scale=0.9,semithick,shape = circle,draw, fill= white, text= black, inner sep =2pt, outer sep= 0pt, minimum size= 5 pt](C) at (1.5,-1.5) {c};
\node[scale=0.8,semithick,shape = circle,draw, fill= black, text= white, inner sep =2pt, outer sep= 0pt, minimum size= 5 pt](D) at (3,0) {d};
     \draw[semithick] (A) to node[LabelStyle]{1} (B) ;
     \draw[semithick] (A) to node[LabelStyle]{2} (C);
     \draw[semithick] (B) to node[LabelStyle]{3} (C);
     \draw[semithick] (B) to node[LabelStyle]{4} (D);
     \draw[semithick] (C) to node[LabelStyle]{5} (D);
\end{tikzpicture}
\end{center}
\caption{\small The bridge network.}
\label{bridge}
\end{figure}
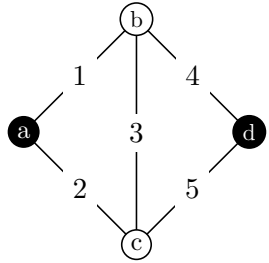

Figure \ref{fig:Fig1} shows the plot of $\frac{\beta_{k+1}^*}{\beta_k^*}$ for this network where  $q=0.5$. As the plot shows, this ratio is not decreasing for all values of $k$, hence $\beta_k^*$ is not IHR.%
\begin{figure}[h!]
  \centering
  \includegraphics[scale=0.4]{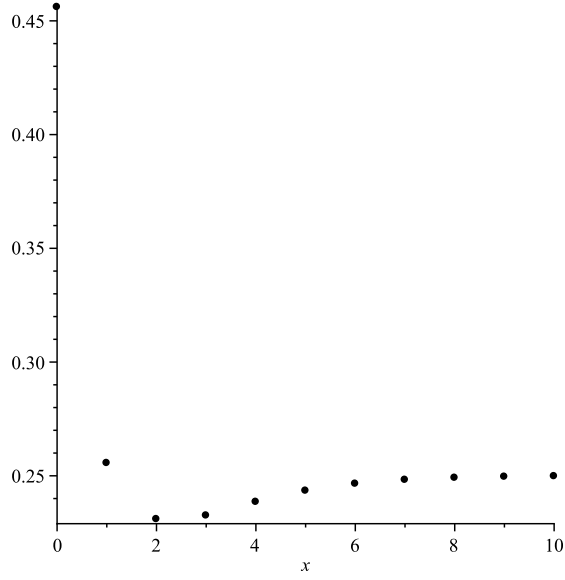}
  \caption{The plot of  $\frac{\beta^*_{k+1}}{\beta_k^*}$ for the bridge network.}
  \label{fig:Fig1}
\end{figure}
\end{Example}

Theorem 4.1 of \cite{got} implies that if  $P(\xi(t)=k)$ is $TP_2$ in $t\in(0,\infty)$, and $k\in\{0,1,...\}$ and $(E(a^{\xi(t)}))^{\frac{1}{t}}$ is  decreasing in $t$ for $a\in(0,1)$, then based on the fact that $\beta^{*}_{k,n}$ is IHRA we get that  $T$ is also IHRA. The following theorem shows that under the condition that  $\xi(t)$ is NHPP,  the network lifetime $T$ is IHRA if the distribution function of the arrival time of the first shock is IHRA.
\begin{Theorem}
Consider a network consisting of $n$ components with lifetime $T$. Suppose that the components of the network is subject to failure by shocks that appear according to a NHPP with m.v.f. $\Lambda(t)=-\log\bar{G}(t)$.  If $\bar{G}$ is IHRA, then $T$ is  IHRA.
\end{Theorem}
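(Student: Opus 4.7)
The plan is to invoke the shock-model criterion referenced in the paragraph immediately preceding the theorem (attributed to Theorem 4.1 of \cite{got}). That criterion states that $T$ is IHRA provided (i) $\beta^*_{k,n}$ is discrete IHRA in $k$, (ii) $P(\xi(t)=k)$ is $\mathrm{TP}_2$ in $(t,k)$, and (iii) $(E(a^{\xi(t)}))^{1/t}$ is decreasing in $t$ for every $a\in(0,1)$. Condition (i) has already been established in Lemma \ref{lem2}, so the work reduces to verifying (ii) and (iii) under the hypothesis that $\xi(t)$ is NHPP with mean value function $\Lambda(t)=-\log\bar G(t)$ and $\bar G$ is IHRA.

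First I would check (ii). Writing $P(\xi(t)=k)=e^{-\Lambda(t)}\Lambda(t)^k/k!$, the $\mathrm{TP}_2$ inequality of Definition \ref{def2} applied to $t_1<t_2$ and $k_1<k_2$ reduces, after canceling the exponential and factorial factors, to $[\Lambda(t_2)/\Lambda(t_1)]^{k_2-k_1}\ge 1$, which holds because $\Lambda$ is nondecreasing.

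Next I would verify (iii). Using the Poisson probability generating function,
\[
E(a^{\xi(t)})=\sum_{k=0}^{\infty}a^k e^{-\Lambda(t)}\frac{\Lambda(t)^k}{k!}=\exp\bigl(-(1-a)\Lambda(t)\bigr),
\]
so $[E(a^{\xi(t)})]^{1/t}=\exp(-(1-a)\Lambda(t)/t)$. The IHRA property of $\bar G$ says that $(\bar G(t))^{1/t}=\exp(-\Lambda(t)/t)$ is decreasing in $t$, equivalently that $\Lambda(t)/t$ is nondecreasing in $t$. Since $1-a>0$, condition (iii) follows at once.

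With (i)--(iii) in place, the cited result delivers the IHRA property of $T$ immediately. The only potentially subtle point is the translation of the IHRA property of $\bar G$ into the star-shape of $\Lambda$, but this is a direct unpacking of the definitions. No involved calculation or new stochastic comparison is required; the heavy lifting was done in Lemma \ref{lem2} and in the general shock-model theorem.
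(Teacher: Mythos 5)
Your proposal is correct and follows essentially the same route as the paper: both verify the $\mathrm{TP}_2$ property of $P(\xi(t)=k)$ in $(t,k)$, compute $E(a^{\xi(t)})=(\bar{G}(t))^{1-a}$ via the Poisson generating function, translate the IHRA property of $\bar{G}$ into the monotonicity of $(\bar{G}(t))^{(1-a)/t}$, and then invoke Theorem 4.1 of Gottlieb together with Lemma \ref{lem2}. The only difference is cosmetic: you phrase the IHRA hypothesis as star-shapedness of $\Lambda$, while the paper works directly with $(\bar{G}(t))^{1/t}$.
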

\begin{proof}
 For $k_1\leq k_2$,
\[\frac{P(\xi(t)=k_2)}{P(\xi(t)=k_1)}=\frac{k_1!}{k_2!}\left(\Lambda(t)\right)^{k_2-k_1}\]
is increasing in $t$ and hence $P(\xi(t)=k)$ is TP$_2$ in $k$ and $t$.  On the other hand, for $a\in (0,1)$
\begin{align}
E(a^{\xi(t)})=&\sum_{n=0}^{\infty} a^n \frac{(\Lambda(t))^n}{n!}e^{-\Lambda (t)}\nonumber\\
=& e^{-\Lambda(t)} \sum_{n=0}^{\infty}  \frac{(a\Lambda(t))^n}{n!}=e^{-{\Lambda(t)}(1-a)} \nonumber\\
=&\left(\bar{G}(t)\right)^{1-a}.\nonumber
\end{align}
If $\bar{G}(t)$ is IHRA, then $(G(t))^\frac{1}{t}$ is decreasing in $t$ and hence \[\left(E(a^{\xi(t)})\right)^{\frac{1}{t}}=\left(\bar{G}(t)\right)^\frac{1-a}{t}\]
 is decreasing in $t$.  Hence the result follows from Theorem 4.1 of Gottlieb \cite{got}.
\end{proof}

In the next theorem the stochastic relationships between t-signature vectors and the lifetimes of two networks are investigated.
\begin{Theorem}\label{th-or}
Consider two networks with lifetimes $T_1$ and $T_2$ and t-signature vectors  ${\bf s}_{1}^{\tau}=(s_{1,1}^{\tau},...,s_{1,n}^{\tau})$ and ${\bf s}_{2}^{\tau}=(s_{2,1}^{\tau},...,s_{2,n}^{\tau})$, respectively. Suppose that the  components of the $i$th network is subject  to failure by shocks appear according to  NHPP with m.v.f. $\Lambda_i(t)=-\log\bar{G}_i(t)$, $i=1,2$.   Assume that, upon arriving the shocks,  the components of the $i$th network fail  with probability $p_{i}$, $i=1,2$.
\begin{itemize}
\item[(a)] If $p_1\geq p_2$, $G_1 \leq_{st} G_2$ and ${\bf s}_{1}^{\tau}\leq_{st}{\bf s}_{2}^{\tau}$ then $T_1\leq_{st}T_2$.
\item[(b)] If $p_1=p_2$, $G_1=_{st}G_2$ and ${\bf s}_{1}^{\tau}\leq_{hr}{\bf s}_{2}^{\tau}$ then $T_1 \leq_{hr} T_2$.
\end{itemize}
\end{Theorem}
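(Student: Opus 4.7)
My plan is to reduce both parts to the general comparison theorems \ref{th-st-gen} and \ref{th-hr-gen} already proved in Section 2, by verifying the required orderings on the counting processes $\{\xi_i(t)\}$ and on the ST-signatures $\mathbf{b}^{(i)}_{n}$. The only ingredients specific to this section are (i) translating assumptions on $G_i$ and $p_i$ into properties of the NHPP arrival counts and of $\beta^{*,(i)}_{k,n}$, and (ii) using the explicit binomial-based formulas from Lemma \ref{lem1} and equation (\ref{beta*}) to obtain the ST-signature orderings.

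For part (a), first observe that $G_1 \leq_{st} G_2$ gives $\Lambda_1(t) = -\log \bar G_1(t) \geq \Lambda_2(t)$, and since a Poisson variable is stochastically increasing in its mean, the NHPP arrival counts satisfy $\xi_1(t) \geq_{st} \xi_2(t)$. It then suffices to check that $\beta^{*,(1)}_{k,n}(q_1) \leq \beta^{*,(2)}_{k,n}(q_2)$ for every $k$, which is exactly $\mathbf{b}^{(1)}_n \leq_{st} \mathbf{b}^{(2)}_n$. I would split this into two monotonicity steps. The integral representation (\ref{beta}) shows $\beta^{*,(i)}_{k,n}(q)$ is increasing in $q$ (the lower limit $1-q^k$ decreases with $q$), so $q_1 \leq q_2$ gives $\beta^{*,(1)}_{k,n}(q_1) \leq \beta^{*,(1)}_{k,n}(q_2)$. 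Then (\ref{beta*}) writes $\beta^{*,(i)}_{k,n}$ as a nonnegative linear combination of $\bar S^\tau_{i,j}$ with coefficients depending only on $(n,j,q)$; the assumption $\mathbf{s}^\tau_1 \leq_{st} \mathbf{s}^\tau_2$ is exactly $\bar S^\tau_{1,j} \leq \bar S^\tau_{2,j}$ for all $j$, giving $\beta^{*,(1)}_{k,n}(q_2) \leq \beta^{*,(2)}_{k,n}(q_2)$. Chaining these and invoking Theorem \ref{th-st-gen} finishes (a).

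For part (b) both networks share the same NHPP, and $P(\xi(t)=k) = e^{-\Lambda(t)}\Lambda(t)^k/k!$ has the TP$_2$ property in $(k,t)$ required by Theorem \ref{th-hr-gen} (since $\Lambda$ is nondecreasing, the ratio $P(\xi(t)=k_2)/P(\xi(t)=k_1) = (k_1!/k_2!)\Lambda(t)^{k_2-k_1}$ is monotone in $t$). The task reduces to showing $\mathbf{b}^{(1)}_n \leq_{hr} \mathbf{b}^{(2)}_n$, i.e.\ that $\beta^{*,(2)}_{k,n}/\beta^{*,(1)}_{k,n}$ is increasing in $k$. This is the main obstacle, and I would handle it by a TP$_2$ composition argument. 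Write $\beta^{*,(i)}_{k,n} = h_i(1-q^k)$ where
\[
h_i(v) \;=\; \sum_{j=0}^{n-1} \bar S^\tau_{i,j}\binom{n}{j}\,v^j(1-v)^{n-j};
\]
since $v=1-q^k$ is increasing in $k$, it suffices to show $h_2(v)/h_1(v)$ is increasing in $v$. View $h_i(v) = \sum_j a_{i,j}L(j,v)$ with $a_{i,j}=\bar S^\tau_{i,j}\binom{n}{j}$ and $L(j,v)=v^j(1-v)^{n-j}$. The hypothesis $\mathbf{s}^\tau_1 \leq_{hr} \mathbf{s}^\tau_2$ is precisely that $\bar S^\tau_{2,j}/\bar S^\tau_{1,j}$ is nondecreasing in $j$, which makes $a_{i,j}$ TP$_2$ in $(i,j)$; and $L(j,v)/L(j-1,v)=v/(1-v)$ is increasing in $v\in(0,1)$, so $L$ is TP$_2$ in $(j,v)$. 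By the basic composition formula (Karlin), $h_i(v)$ is TP$_2$ in $(i,v)$, giving the required monotonicity of $h_2/h_1$, and hence $T_1 \leq_{hr} T_2$ by Theorem \ref{th-hr-gen}.

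The only subtle step is the composition in part (b): one must be careful that the natural variable is $v = 1-q^k$ rather than $q^k$, so that both $a_{i,j}$ and the Bernstein-type kernel $L(j,v)$ are TP$_2$ in the correct pair of arguments. Once the reparametrization is in place, everything else is direct bookkeeping.
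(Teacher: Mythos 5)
Your proposal is correct and follows essentially the same route as the paper: both parts reduce to Theorems \ref{th-st-gen} and \ref{th-hr-gen}, with part (a) handled by the same two-step monotonicity chain in $q$ and in the signature (you merely swap the order of the two steps and use the $\bar S^{\tau}_j$ form of (\ref{beta*}) where the paper uses the beta-integral form), and part (b) handled by the same TP$_2$ composition of the binomial kernel with $\bar S^{\tau}_{i,j}$. Your reparametrization $v=1-q^k$ just makes explicit the TP$_2$ claim that the paper asserts directly for $\binom{n}{j}(1-q^k)^j q^{k(n-j)}$.
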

\begin{proof}
Let  $\beta_{k,n}^{*(i)}=\sum_{j=k+1}^{\infty}b_{j,n}^{*(i)},$ where $b_{j,n}^{*(i)}$ is the $j$th  element of  ST-signature associated to the $i$th network and $q_i=1-p_i,~i=1,2$.
 \begin{itemize}
  \item[(a)] Let $\{\xi_i(t),t>0\}$ be the  NHPP with m.v.f. $\Lambda_i(t)=-\log\bar{G}_i(t),~i=1,2$. Supose that $g_{j,n}(q_i^k)=\int_{1-q_i^k}^1 \frac{u^{j-1}(1-u)^{n-j}}{B(j,n-j+1)}du,~i=1,2.$
  Using (\ref{beta}), it can be seen that
 $\beta^{*(i)}_{k,n}=\sum_{j=1}^{n} s_{i,j}^{\tau} g_{j,n}(q_i^k)$, where $g_{j,n}(q_i^k)$ is an increasing function of $q_i^k$. Hence
\begin{align*}
 \beta_{k,n}^{*(1)}&= \sum_{j=1}^{n} s_{1,j}^{\tau} g_{j,n}(q_1^k)\\&\leq \sum_{j=1}^{n} s_{2,j}^{\tau} g_{j,n}(q_1^k)\\
  &\leq \sum_{j=1}^{n} s_{2,j}^{\tau} g_{j,n}(q_2^k)=\beta_{k,n}^{*(2)}
\end{align*}
in which the first inequality follows from the fact that ${\bf s}_1^{\tau}\leq_{st} {\bf s}_2^{\tau}$ and $g_{j,n}(q^k)$ is increasing in $j$ and second equality follows from the assumption $p_1\geq p_2$ which implies $g_{j,n}(q_1^k)\leq g_{j,n}(q_2^k)$. Also, $G_1 \leq_{st} G_2$ implies $\xi_1(t)\geq_{st} \xi_2(t)$.
Then the result follows from Theorem  \ref{th-st-gen}.
\item[(b)] It is easy to see that $ {n \choose j} (1-q_1^k)^j q_1^{k(n-j)}$ is $TP_2$ in $k$ and $j$. Also, ${\bf s}_1^{\tau}\leq_{hr} {\bf s}_2^{\tau}$ implies that ${\bar S}_{i,j}^{\tau} $ is $TP_2$ in $i$ and $j$. Therefore from basic decomposition formula (see \cite{karlin}),
     $$\beta^{*(i)}_{k,n}= \sum_{j=0}^{n-1} {\bar S}_{i,j}^{\tau}  {n \choose j} (1-q_1^k)^j q_1^{k(n-j)}$$ is $TP_2$ in $k\in\{0,1,...\}$ and $i\in\{1,2\}$ which implies ${\bf b}_{n}^{*(1)}\leq_{hr} {\bf b}_{n}^{*(2)}$. The proof is complete based on  Theorem \ref{th-hr-gen}.
\end{itemize}
\end{proof}

\begin{Example}\label{Example-dod}
{\rm Consider again Example \ref{examp1}. Let the network  be subject to shocks that appear according to a NHPP with m.v.f. $\Lambda(t)=-\log\bar{G}(t)$ and in each shock, each link fails with probability $0.1$.  We are interested in assessing  the reliability of the network in the cases where
the time to the first shock has either an exponential distribution with a constant hazard rate of $1$ (Exp(1)), or a Weibull distribution with  shape parameter $2$ and scale parameter $1$ (W(2,1)), or a linear hazard  distribution (L(1,1/2)). The survival functions of these distributions, respectively, are given as
\begin{align*}
\bar{G}_1(t)=&\exp(-t), \qquad t>0,\\
\bar{G}_2(t)=&\exp(-{t^2}), \qquad t>0,\\
\bar{G}_3(t)=&\exp(-t-t^2), \qquad t>0.
\end{align*}
\begin{figure}[h!]
  \centering
  \includegraphics[scale=0.42]{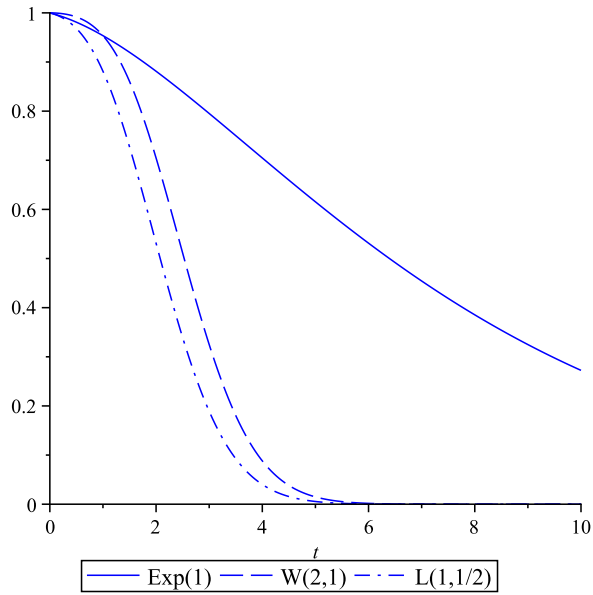}
  \caption{The plot of network reliability in Example \ref{Example-dod}.}
  \label{Fig}
\end{figure}
It can be easily shown that L(1,1/2) is stochastically less than both Exp(1) and W(2,1). Hence, as Figure \ref{Fig} reveals, based on Theorem \ref{th-or}, the reliability of the network for the L(1,1/2) case is less than that of  the cases of Exp(1) or W(2,1).
It can be easily seen that Exp(1) and W(2,1) are not stochastically ordered. Also, the plot shows that the network lifetimes are not stochastically ordered.}
\end{Example}
\section{Network reliability under fatal shocks}
In this section,  we assume that each shock is fatal for the network. That is, when a shock arrives it  leads to failure of  at least one component. Let fatal shocks occur according to a counting process, $\{\zeta(t),t>0\}$,   at random
time instants  $\varrho_1,\varrho_2,\dots$.  It is clear that the network finally fails by one of the fatal shocks.   In order to  obtain the reliability function of the network, in such a situation,  first we obtain $P(T=\varrho_i),~i=1,\dots n$.
Consider a network consists of $n$ components. It can be shown  that the  number of ways showing the order of component failures is  $n^*$ given in Lemma \ref{n*}.  Then,  under the
assumption that all ways of  the order of component failures are equally likely, we have
\begin{align*}
  s_i^*\equiv P(T=\varrho_i)=\frac{n_i  }{n^*},\quad i=1,\dots n,
\end{align*}
where $n_i$ is the  number of ways of the order of component failures in which $i$th fatal shock  causes the network fails.  It is obvious that $s_i^*$   just depends on the structure of the network. In the following example, we compute ${\bf s}^*=(s_1^*,\dots,s_n^*)$.
\begin{Example} {\rm
Consider again Example \ref{exa}. Let $\pi$ denote the order of link failures in the network and $r(\pi)$  the  shock number that caused the failure of the network.  All possible $\pi$ and corresponding $r(\pi)$ have been presented in Table \ref{tab2}. It is clear that
\begin{align*}
  s_1^*=P(T=\varrho_1)=\frac{7}{13} \qquad s_2^*=P(T=\varrho_2)=\frac{6}{13}, \qquad s_3^*=P(T=\varrho_3)=0.
\end{align*}
That is, ${\bf s}^*=(\frac{7}{13},\frac{6}{13},0)$.

\begin{table}[h]
\begin{center}
\caption{\small All possible $\pi$ and corresponding $r(\pi)$  }
\label{tab2}
\begin{tabular}{|c|c|c|c|c|c|}
\hline
 $\pi$ & $r(\pi)$ & $\pi$ & $r(\pi)$ & $\pi$ & $r(\pi)$ \\
 \hline
(1,2,3) & 1 &  (\{1,3\},2) & 1 & (\{1,2,3\}) & 1
\\ \hline
(1,3,2) & 1 &  (\{2,3\},1) & 1 &  &
\\ \hline
(2,1,3) & 2 &  (\{1,2\},3) & 1 &  &
\\ \hline
(2,3,1) & 2 &  (3,\{1,2\}) & 2 &  &
\\ \hline
(3,1,2) & 2 &  (2,\{1,3\}) & 2 &  &
\\ \hline
(3,2,1) & 2 &  (1,\{2,3\}) & 1 &  &
\\
\hline
\end{tabular}
\end{center}
\end{table}
}\end{Example}
From the fact that  $s_i^*$ does not depend
on the random mechanism of the component failures, we obtain the reliability function of the network as
\begin{align}\label{rep1}
P(T>t)=&\sum_{i=1}^{n} P(T>t|T=\varrho_i)P(T=\varrho_i)\nonumber\\
=& \sum_{i=1}^{n}s_i^*P(\varrho_i>t|T=\varrho_i)\nonumber\\
=& \sum_{i=1}^{n}s_i^*P(\varrho_i>t).
\end{align}
From the fact that $\zeta(t)=k$ if and only if $\varrho_k\leq t<\varrho_{k+1}$, it can be seen that
\begin{align}\label{rep2}
P(T>t)=&\sum_{i=0}^{n-1} {\bar S}^*_iP(\zeta(t)=i)
\end{align}
where ${\bar S}^*_i=\sum_{j=i+1}^{n}s^*_j$.

\begin{Remark}{\rm
It is noted that the representations (\ref{rep1}) and (\ref{rep2}) are similar to representations (\ref{rep-t1}) and (\ref{eq-org}), respectively. Hence, the results  obtained based on (\ref{rep-t1}) and (\ref{eq-org}) in Section \ref{2} are valid for the fatal shock model.
}\end{Remark}

{\bf Acknowledgement:}

 M. Asadi's research was carried out in IPM Isfahan branch and was in part supported by a grant from IPM (No. 93620411).
{\small
{}}
\end{document}